\pdfoutput=1
\documentclass[11pt]{article}
\usepackage{amsmath, amssymb, amsthm, mathtools}
\usepackage{geometry}
\usepackage{hyperref}
\usepackage[numbers,sort&compress]{natbib}
\usepackage{setspace}
\usepackage{tikz}
\usetikzlibrary{arrows.meta, positioning, calc, decorations.markings,
                decorations.pathreplacing, shapes.geometric}
\usepackage{bm}
\usepackage{array}
\usepackage{booktabs}
\usepackage{enumitem}
\usepackage{xcolor}
\usepackage{graphicx}
\usepackage{float}
\usepackage{pgfplots}
\pgfplotsset{compat=1.18}
\hypersetup{colorlinks=true,
            linkcolor=blue!60!black,
            citecolor=blue!60!black,
            urlcolor=blue!60!black,
            pdfauthor={Nicholas G. Polson and Daniel Zantedeschi},
            pdftitle={Kakeya Conjecture and Conditional Kolmogorov Complexity},
            pdfkeywords={algorithmic dimension, conditional complexity, fiber
                         decomposition, Kakeya conjecture, Kolmogorov complexity,
                         point-to-set principle}}

\newtheorem{proposition}{Proposition}

\newtheorem{corollary}{Corollary}
\theoremstyle{definition}
\newtheorem{definition}{Definition}

\theoremstyle{remark}
\newtheorem{remark}{Remark}
\newtheorem{example}{Example}

\newcommand{\R}{\mathbb{R}}
\newcommand{\N}{\mathbb{N}}
\newcommand{\Z}{\mathbb{Z}}
\newcommand{\Sph}{\mathbb{S}}
\newcommand{\eps}{\varepsilon}

\newcommand{\cK}{\mathcal{K}}

\newcommand{\cF}{\mathcal{F}}

\newcommand{\dimH}{\dim_{\mathrm H}}
\newcommand{\dimA}{\dim^{A}}

\newcommand{\restr}{\!\upharpoonright\!}

\title{\textbf{Kakeya Conjecture and Conditional Kolmogorov Complexity}}
\author{Nicholas~G.~Polson%
  \thanks{Booth School of Business, University of Chicago.
          E-mail: \texttt{ngp@chicagobooth.edu}.}%
  \and
  Daniel~Zantedeschi%
  \thanks{Muma College of Business, University of South Florida.
          E-mail: \texttt{danielz@usf.edu}.}}
\date{Working paper, March 2026.  Comments welcome.}
\begin{document}
\maketitle

\begin{abstract}
This paper develops an information-theoretic framework for algorithmic complexity under
regular identifiable fibering.  The central question is: when a decoder is given information
about the fiber label in a fibered geometric set, how much can the residual description length
be reduced, and when does this reduction fail to bring dimension below the ambient rate?
We formulate a directional compression principle, proposing that sets admitting regular,
identifiable fiber decompositions should remain informationally incompressible at ambient dimension,
unless the fiber structure is degenerate or adaptively chosen.  The principle is phrased in the
language of algorithmic dimension and the point-to-set principle of Lutz and Lutz, which translates
pointwise Kolmogorov complexity into Hausdorff dimension.  We prove an exact analytical result:
under effectively bi-Lipschitz and identifiable fibering, the complexity of a point splits additively
as the sum of fiber-label complexity and along-fiber residual complexity, up to logarithmic overhead,
via the chain rule for Kolmogorov complexity.  The Kakeya conjecture (asserting that sets containing
a unit segment in every direction have Hausdorff dimension~$n$) motivates the framework.
The conjecture was recently resolved in $\R^3$ by Wang and Zahl~\cite{wangzahl2025}; it
remains open in dimension $n \geq 4$, precisely because adaptive fiber selection undermines
the naive conditional split in the general case.
We isolate this adaptive-fibering obstruction as the key difficulty and propose a formal research
program connecting geometric measure theory, algorithmic complexity, and information-theoretic
compression.
\end{abstract}

\medskip\noindent\textit{Keywords:}~
Algorithmic dimension, conditional complexity, fiber decomposition, geometric side information,
Kakeya conjecture, Kolmogorov complexity, point-to-set principle, source coding.

\section{Introduction}
\label{sec:intro}

Information theory has long studied how side information at the decoder
reduces encoding rate.  In the classical Slepian--Wolf and Wyner--Ziv frameworks, a compressor
exploits correlation between a source and the decoder's side information to reduce transmission
rate.  This paper develops a geometric variant: suppose a point belongs to a structured geometric
object organized by fibers (line segments, curves, or submanifolds) indexed by some label.
If the decoder is given side information about the fiber label, how much can the residual
description length be reduced?

More formally, consider $x \in \R^n$ decomposed as $x = \psi(z, u)$, where $z$ is a fiber
label and $u$ is a coordinate along the fiber.  The encoder transmits a finite-precision
description of $x$; the decoder holds side information about $z$.  The residual complexity is
the Kolmogorov complexity of $u$ given $z$.  The compression principle asks: under what
conditions does the total complexity split additively, and when does this additive split force
the point to remain incompressible at the ambient informational dimension?

\subsection{Kakeya as the Canonical Configuration}

The Kakeya conjecture provides the archetypal instance.  A Kakeya set $E \subseteq \R^n$
contains a unit line segment in every direction on $\Sph^{n-1}$.  The conjecture asserts
$\dimH(E) = n$, despite the possibility of measure-zero constructions.  The difficulty is
fundamental: a combinatorial richness condition (a segment in every direction) must force
a metrical conclusion (full Hausdorff dimension).

The algorithmic viewpoint makes the connection transparent.  For a point $x = a(e) + te$
in a Kakeya set, the chain rule decomposes description length at precision $r$ into three
components:
\begin{align*}
K(e \restr r)                          &\approx (n-1)r
  && \text{directions fill } \Sph^{n-1}, \\
K(t \restr r \mid e \restr r)          &\approx r
  && \text{along-segment position, not determined by } e, \\
K(a(e) \restr r \mid e \restr r)       &= O(\log r)
  && \text{base point, under regularity.}
\end{align*}
Combining: $K(x \restr r) \approx (n-1)r + r = nr$, so $\dim(x) = n$, and the
point-to-set principle \cite{lutz2017} then yields $\dimH(E) = n$.

The key subtlety is the regularity of the base point map $e \mapsto a(e)$.  When the
map is Lipschitz (the \emph{sticky Kakeya} condition of Lutz and Stull \cite{lutzstull2020}:
$|a(e)-a(e')| \lesssim |e-e'|$), the third term satisfies $K(a(e)\restr r \mid e\restr r)
= O(\log r)$ and the argument holds.  In the non-sticky case, $K(a(e)\restr r \mid e\restr r)$
can be as large as $O(r)$: the base point is then informationally independent of the
direction, providing an adversarial oracle with an extra dimension to exploit.  From the
present perspective, this identifies a central information-theoretic obstruction in
dimensions $n \geq 4$ (see Remark~\ref{rem:R3}).

\subsection{Algorithmic Dimension as the Unifying Language}

The precise language is provided by algorithmic dimension, defined as the limiting ratio of
Kolmogorov complexity to precision.  The point-to-set principle of Lutz and Lutz bridges
pointwise algorithmic dimension to global Hausdorff dimension: for analytic sets, the
Hausdorff dimension equals the minimax value of oracle-relativized pointwise dimension.
This framework encodes the side-information question: an oracle represents a decoder with
access to geometric side information, and proving full dimension requires showing that no
oracle can simultaneously compress all points of the set below the ambient rate.  At the
pointwise level, the ambient-dimension heuristic is an incompressibility claim, closely
aligned with Martin-L\"of-style algorithmic typicality under admissible side information.

\subsection{Kolmogorov Complexity and Conditional Complexity}

The central quantity throughout the paper is \emph{Kolmogorov complexity}.  For a binary
string $x$, the Kolmogorov complexity $K(x)$ is the length of the shortest program on a
fixed universal prefix-free Turing machine that outputs $x$.  Intuitively, $K(x)$ is the
minimum number of bits needed to describe $x$.

The \emph{conditional Kolmogorov complexity} $K(x \mid y)$ is the length of the shortest
program that outputs $x$ when given $y$ as an auxiliary input.  It measures how much
information about $x$ remains once $y$ is known, the algorithmic analogue of conditional
entropy $H(X \mid Y)$.  The \emph{chain rule} states
\[
K(x, y) = K(y) + K(x \mid y) + O(\log K(x,y)),
\]
mirroring the Shannon identity $H(X, Y) = H(Y) + H(X \mid Y)$.

In this paper, $x$ is a point in $\R^n$ described at finite precision $r$, $z$ is a fiber
label, and $K(x \restr r \mid z \restr r)$ measures the residual description length once
the fiber label is known.  The compression problem is to understand when and by how much
this conditional complexity is reduced relative to $K(x \restr r)$.

\subsection{The Adaptive-Fibering Obstruction}

The compression principle presupposes a well-defined fiber assignment: each point has a
unique label and along-fiber coordinate.  In irregular Kakeya sets, however, a single point
may lie on multiple line segments, so the decomposition is not unique.  An adversarial oracle
can exploit this multiplicity, selecting the fiber label achieving maximum compression.  This
adaptive-fibering phenomenon identifies a central obstruction from the present
information-theoretic perspective.

\subsection{Contributions and Organization}

The contributions are threefold.  First, we formulate a conditional-compression principle for
fibered geometric objects in information-theoretic language.  Second, we prove an exact
analytical result: under effectively bi-Lipschitz and identifiable fibering, the chain rule
yields
$K^A(x \restr r) = K^A(z \restr r) + K^{A,z}(u \restr r) + O(\log r)$,
with two corollaries connecting pointwise dimension to Hausdorff dimension.
Third, we identify the adaptive-fibering obstruction as the fundamental difficulty and
propose a formal research program at the interface of geometric measure theory, algorithmic
complexity, and source coding.

Beyond the Kakeya conjecture itself, this framework addresses a recurring practical
question in geometry-aware coding systems.  Point-cloud compression uses directional
priors; video coding exploits motion vectors; neural compression of high-dimensional data
and retrieval-augmented generation rely on latent fiber labels.  In all these settings the
decoder holds partial geometric side information, and the key question is: how much does
that label reduce residual description length?  When labels arise from a rich but
non-unique fibering, the naive Shannon or chain-rule bound overstates compression gains
precisely because of the adaptive-fibering obstruction identified here.  The Kakeya
problem is therefore not merely a case study but the canonical stress test for this class
of questions.  The minimax formulation $\mathcal{K}_r(E)$ and the Blackwell preorder on
side-information schemes (Section~\ref{sec:IT}) provide a language for principled rate
allocation whenever fiber labels are ambiguous or degraded.

The paper is organized as follows.
Section~\ref{sec:setup} develops the abstract framework.
Section~\ref{sec:results} presents the central proposition, proof, and corollaries.
Section~\ref{sec:kakeya} applies the framework to Kakeya sets.
Section~\ref{sec:obstruction} discusses the adaptive-fibering obstruction.
Section~\ref{sec:IT} bridges the framework to source coding, metric entropy, and Blackwell comparisons.
Section~\ref{sec:practical} illustrates practical implications via worked examples.
Section~\ref{sec:program} outlines a research program of open problems.
Section~\ref{sec:literature} positions the work relative to existing literature.
Section~\ref{sec:conclusion} concludes.
Appendices provide notation, a technical sidebar, and finite-precision schematics.

\section{Problem Setup: Compression with Geometric Side Information}
\label{sec:setup}

\subsection{Abstract Fibered Sets}

Let $X \subseteq \R^n$ be compact and $(Z, d_Z)$ a metric space for fiber labels.
A \emph{fibering} is a family $\{F_z\}_{z \in Z}$ of nonempty compact subsets of $X$
such that $X = \bigcup_{z \in Z} F_z$.  A point $x \in X$ is decomposed as
$x = \psi(z, u)$, where $z \in Z$ is the fiber label and $u$ is a coordinate on $F_z$.
We assume each fiber carries a metric structure so that the along-fiber coordinate
can be described at finite precision.

\subsection{Finite-Precision Descriptions and Conditional Complexity}

Fix precision $r > 0$.  We describe each quantity to within $2^{-r}$ in the Euclidean
metric.  Write $z \restr r$ for a finite-precision encoding of $z$ at resolution $r$,
and similarly for $x \restr r$ and $u \restr r$.

The \emph{conditional complexity} of $x$ given the fiber label $z$ at precision $r$ is
\[
K(x \restr r \mid z \restr r) \approx K(u \restr r \mid z \restr r),
\]
reflecting that once the fiber is identified, the remaining information is the along-fiber
coordinate.  Heuristically, one expects a decomposition of the form
\[
K(x \restr r) \approx K(z \restr r) + K(u \restr r \mid z \restr r) + K(a(z) \restr r \mid z \restr r)
               + O(\log r),
\]
where $K(a(z) \restr r \mid z \restr r)$ is the overhead from specifying the fiber's location.
The exact conditions under which this decomposition holds as an equality up to $O(\log r)$
are the subject of Proposition~\ref{prop:exact}.

\subsection{Identifiability and Ambiguity Gain}

A fibering is \emph{identifiable} at $x$ if $\{z : x \in F_z\}$ is a singleton; it is
\emph{identifiable} on $X$ if identifiable at every point.

The \emph{ambiguity gain} at $x$ is
\[
\Gamma_r(x) := K(x \restr r) - \inf_{z:\, x \in F_z} K(x \restr r \mid z \restr r),
\]
measuring the advantage an adaptive compressor gains by pointwise fiber selection.
In identifiable fiberings with stable parametrization, $\Gamma_r(x)$ is sublinear in $r$.
In irregular regimes, $\Gamma_r(x)$ can be linear in $r$, indicating substantial adaptive
compression.

\section{Exact Results in Regular Fibering Regimes}
\label{sec:results}

\subsection{The Central Proposition}

\begin{proposition}[Effective additive decomposition]
\label{prop:exact}
Let $Z \subseteq \R^{n-1}$ and $U \subseteq \R$ be compact sets, encoded at precision $r$
by standard prefix-free binary descriptions of dyadic approximations.  Let $X \subseteq \R^n$
be compact, and let $\psi: Z \times U \to X$ be given by $\psi(z, u) = a(z) + u \cdot \phi(z)$,
where:
\begin{enumerate}[label=(\roman*)]
\item \emph{Effectively bi-Lipschitz:} there exist computable constants $L_1, L_2 > 0$,
each specifiable in $O(\log r)$ bits at precision $r$, with
\[
L_1 \|(z_1,u_1)-(z_2,u_2)\| \leq \|\psi(z_1,u_1)-\psi(z_2,u_2)\|
\leq L_2 \|(z_1,u_1)-(z_2,u_2)\|
\]
for all $(z_1,u_1),(z_2,u_2)\in Z\times U$;
\item \emph{Identifiable:} $\psi$ is injective on $Z \times U$, so each $x \in X$ has
a unique preimage $(z(x), u(x))$; and
\item \emph{Computable with effective modulus:} $\psi$ is a computable function, meaning
there is a Turing machine that on input $(z\restr r, u\restr r, r)$ outputs
$\psi(z,u)\restr r$ in finitely many steps.  Equivalently, the maps $a\colon Z\to\R^n$
and $\phi\colon Z\to\Sph^{n-1}$ are computable with a computable modulus of
uniform continuity at each precision $r$.
\end{enumerate}
Then for every oracle $A \subseteq \N$ and every $x = \psi(z, u) \in X$,
\[
K^A(x \restr r) = K^A(z \restr r) + K^{A,z}(u \restr r) + O(\log r),
\]
where $K^A$ denotes prefix-free Kolmogorov complexity relative to oracle $A$, $K^{A,z}$
denotes complexity with oracle $A$ augmented by the string $z\restr r$, and the $O(\log r)$
constant depends on $L_1, L_2$ but not on $x$.
\end{proposition}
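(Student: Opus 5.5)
The plan is to show that $x\restr r$ and the pair $(z\restr r, u\restr r)$ are computable from each other, relative to $A$ and with $O(\log r)$ bits of advice, and then apply the chain rule for Kolmogorov complexity to the pair.

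\textbf{Step 1: bound $K^A(x\restr r)$ by the pair.} By (iii), there is a fixed machine that maps $(z\restr{(r+c)}, u\restr{(r+c)}, r)$ to a dyadic point within $2^{-r}$ of $\psi(z,u)$. Here $c=\lceil\log_2 L_2\rceil+O(1)$ accounts for the Lipschitz constant in (i). Since $c$ is a constant, $K^A(z\restr{(r+c)})\le K^A(z\restr r)+O(\log r)$: one describes the $O(c)$ extra digits and $r$ itself in $O(\log r)$ bits. This gives $K^A(x\restr r)\le K^A(z\restr r,u\restr r)+O(\log r)$.

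\textbf{Step 2: the reverse bound.} This is the step I expect to be the main obstacle, because $\psi^{-1}$ need not be computable in closed form. Injectivity (ii) and the lower bi-Lipschitz bound in (i) give a search procedure.
\begin{itemize}[leftmargin=*]
\item Given $x\restr r$ and $A$, enumerate dyadic pairs $(z',u')$ on a grid of mesh $2^{-(r+c')}$.
\item Compute $\psi(z',u')$ to precision $r+c'$ via (iii), and accept the first pair whose image lies within $2^{-r+1}$ of $x\restr r$.
\item The true $(z,u)$ has grid neighbours that pass this test. Conversely, any accepted pair satisfies $\|(z',u')-(z,u)\|\le 2^{-r+2}/L_1$.
\end{itemize}
So the output determines $(z,u)$ to precision $r-O(1)$, with $O(1)$ depending on $L_1$. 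Recovering the last $O(1)$ bits of $z\restr r,u\restr r$ costs $O(1)$ bits of advice.

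One subtlety is that the search ranges over grid points of $Z\times U$, and $Z$ may not be computably enumerable. To avoid this, I would relax the search to all dyadic $(z',u')$ within $2^{-r}$ of some point of $Z\times U$, and use the lower Lipschitz bound only near actual points. Alternatively, I would simply assume, as the statement implicitly does, that the encodings are of points of $Z$ and $U$.

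Either way, $K^A(z\restr r,u\restr r)\le K^A(x\restr r)+O(\log r)$, with constants depending on $L_1,L_2$ and not on $x$.

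\textbf{Step 3: apply the chain rule.} Relativize to $A$ the chain rule $K(x,y)=K(y)+K(x\mid y)+O(\log K(x,y))$ recalled in the introduction. This gives
\[
K^A(z\restr r,u\restr r)=K^A(z\restr r)+K^{A}(u\restr r\mid z\restr r)+O(\log r).
\]
Here $K(z\restr r,u\restr r)=O(r)$, so $O(\log K)=O(\log r)$. The term $K^A(u\restr r\mid z\restr r)$ is by definition $K^{A,z}(u\restr r)$, with $z\restr r$ supplied as auxiliary input.

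If one insists on $K^{A,z}$ meaning access to the full oracle $z$, the identity is no longer exact. Only the inequality $K^{A,z}(u\restr r)\le K^A(u\restr r\mid z\restr r)+O(\log r)$ holds, and the matching reverse bound requires precision-$r$ relativization, as in the Lutz--Stull lemma. I would adopt the string-conditioning reading, which the statement explicitly gives.

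Combining Steps 1–3 yields the claimed identity with an $O(\log r)$ error uniform in $x$.
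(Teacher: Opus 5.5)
Your proposal is correct and follows essentially the same route as the paper: forward evaluation of $\psi$ via the upper Lipschitz bound, inversion by a dyadic grid search using the lower Lipschitz bound plus an $O(1)$-bit pointer to fix the last digits, and then the relativized chain rule under the string-conditioning reading of $K^{A,z}$. The differences are minor. The paper concatenates optimal programs for $z\restr r$ and $u\restr r$ directly instead of first bounding by $K^A(z\restr r,u\restr r)$. The domain subtlety you flag, that $Z$ need not contain dyadic points near $z$ or be decidable, is one the paper's grid $\mathcal G_s$ passes over silently.
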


\begin{proof}
\textbf{Encoding model.}
Throughout, $z\restr r$ denotes the $r$-bit dyadic approximation to $z$ (the nearest
point of $(2^{-r}\Z)^{n-1}$), and similarly for $u\restr r \in 2^{-r}\Z$ and
$x\restr r \in (2^{-r}\Z)^n$.  Kolmogorov complexity is prefix-free and defined relative
to a fixed universal oracle machine $\mathcal U$.

\textbf{Upper bound.}
Let $p_z$ be an optimal $A$-program for $z\restr r$ (so $|p_z|=K^A(z\restr r)$) and $p_u$
an optimal $(A,z\restr r)$-program for $u\restr r$ (so $|p_u|=K^{A,z}(u\restr r)$).
Define program $P_+$:
\begin{enumerate}[label=(\alph*),leftmargin=*,nosep]
\item Decode the self-delimiting pair $\langle p_z, p_u\rangle$.  The prefix-length
      encoding of $|p_z|$ uses $2\lceil\log_2(|p_z|+1)\rceil \leq O(\log r)$ bits,
      since $|p_z|\leq (n-1)r + O(1)$.
\item Run $p_z$ with oracle $A$ to recover $z\restr r$; run $p_u$ with oracle $(A,z\restr r)$
      to recover $u\restr r$.
\item Evaluate $\psi(z\restr r, u\restr r)$.  By the Lipschitz upper bound,
      $\|\psi(z\restr r, u\restr r) - x\| \leq L_2\sqrt{n}\cdot 2^{-r}$,
      so rounding to precision $r+c_+$, where $c_+ = \lceil\log_2(L_2\sqrt{n})\rceil = O(1)$,
      yields a string within $2^{-r}$ of $x$.  Output this string as $x\restr r$.
\end{enumerate}
The total program length is $|p_z|+|p_u|+O(\log r)$, giving
\begin{equation}
K^A(x \restr r) \leq K^A(z \restr r) + K^{A,z}(u \restr r) + O(\log r). \label{eq:ub}
\end{equation}

\textbf{Lower bound.}
We construct an explicit inversion procedure $P_-$ that computes $(z\restr r, u\restr r)$
from $x\restr r$ with $O(\log r)$ description overhead.

Set $c_- = \lceil\log_2(4L_2\sqrt{n}/L_1)\rceil + 1$ (a fixed integer, $O(1)$, determined
by $L_1,L_2$).  Run $P_-$ at grid resolution $s = r + c_-$:
\begin{enumerate}[label=(\alph*),leftmargin=*,nosep]
\item Encode the parameter $r$ as an auxiliary input to $P_-$ at cost $\lceil\log_2 r\rceil=O(\log r)$
      bits.  The constants $L_1, L_2, c_-$ are hardcoded in $P_-$.
\item Enumerate the finite dyadic grid
      $\mathcal G_s = \bigl(Z \cap 2^{-s}\Z^{n-1}\bigr)\times\bigl(U\cap 2^{-s}\Z\bigr)$
      in lexicographic order.  Compactness of $Z\times U$ bounds $|\mathcal G_s|\leq C_0\cdot 2^{ns}$
      for a computable constant $C_0$.
\item For each $(z'',u'')\in\mathcal G_s$, compute $\psi(z'',u'')$ and test whether
      $\|\psi(z'',u'')-x\restr r\|\leq 2\cdot 2^{-r}$.
      Halt at the first match and output $(z''\restr r, u''\restr r)$.
\end{enumerate}

\emph{Correctness.}  Let $(z^\dagger,u^\dagger)\in\mathcal G_s$ be the nearest grid point to
$(z,u)$.  Then $\|(z^\dagger,u^\dagger)-(z,u)\|\leq\sqrt{n}\cdot 2^{-s}$, so by the Lipschitz
upper bound and the triangle inequality,
\[
\|\psi(z^\dagger,u^\dagger)-x\restr r\|
  \leq L_2\sqrt{n}\cdot 2^{-s} + 2^{-r}
  \leq \bigl(L_2\sqrt{n}\cdot 2^{-c_-}+1\bigr)\cdot 2^{-r}
  \leq 2\cdot 2^{-r},
\]
where the last step uses $c_- \geq \log_2(L_2\sqrt{n})$, so the search halts.  For any
matching point $(z'',u'')$, the bi-Lipschitz lower bound gives
\begin{equation}
\|(z'',u'')-(z,u)\|
  \;\leq\; \frac{\|\psi(z'',u'')-x\restr r\| + \|x\restr r - x\|}{L_1}
  \;\leq\; \frac{3\cdot 2^{-r}}{L_1}. \label{eq:match}
\end{equation}

\emph{Exact recovery.}
The bound \eqref{eq:match} controls how far the grid match $(z'',u'')$ can be from the
true $(z,u)$, but the ratio $3/L_1$ may exceed $1$, so $(z''\restr r,u''\restr r)$ need
not equal $(z\restr r,u\restr r)$ in general.  We handle this as follows.  Using floor
truncation, $w\restr r := \lfloor w_i\cdot 2^r\rfloor\cdot 2^{-r}$ componentwise,
the value $z\restr r$ lies in the $\ell^\infty$-ball of radius $\lceil 3/L_1\rceil\cdot 2^{-r}$
centered at $z''\restr r$.  The number of $r$-scale dyadic cells within this ball is at most
$(2\lceil 3/L_1\rceil + 1)^n$, a constant $M$ depending only on $L_1$ and $n$.
The program $P_-$ therefore outputs $(z''\restr r, u''\restr r)$ together with a
$\lceil\log_2 M\rceil$-bit \emph{auxiliary pointer} appended as hardwired advice, not
derived from the first match itself.  The pointer indexes $z\restr r$ within the list of
at most $M$ candidate cells; a companion subroutine checks each candidate against the
original $x\restr r$ via $\psi$ (using hypothesis (iii)) and identifies the unique match.
From this, $(z\restr r, u\restr r)$ is recovered exactly.  The pointer length is $O(1)$
(depending only on $L_1, n$, both hardcoded), so it is absorbed into the $O(\log r)$ budget.

\emph{Overhead accounting.}
$P_-$ is a fixed program with $L_1,L_2,c_-,M$ hardcoded, and the only runtime input is
$r$ (at cost $O(\log r)$ bits), so the inversion contributes
\begin{equation}
K^A(z \restr r,\, u \restr r) \leq K^A(x \restr r) + O(\log r). \label{eq:inv}
\end{equation}
By the chain rule for prefix-free complexity (with $K^A(z\restr r)\leq (n-1)r+O(1)$),
\begin{equation}
K^A(z \restr r)+K^{A,z}(u \restr r)
  \leq K^A(z \restr r,u \restr r)+O\!\bigl(\log K^A(z\restr r)\bigr)
  \leq K^A(x \restr r)+O(\log r). \label{eq:lb}
\end{equation}
Combining \eqref{eq:ub} and \eqref{eq:lb} yields $K^A(x\restr r)=K^A(z\restr r)+K^{A,z}(u\restr r)+O(\log r)$,
where the constant depends on $L_1,L_2,n$ but not on $x$ or $r$.
\end{proof}

\subsection{Corollary: Additivity of Algorithmic Dimension}

\begin{corollary}
\label{cor:dimension-additive}
Under the hypotheses of Proposition~\ref{prop:exact}, for any oracle $A$ and
$x = \psi(z,u) \in X$,
\[
\dimA(x) = \liminf_{r\to\infty} \frac{K^A(z \restr r) + K^{A,z}(u \restr r)}{r}.
\]
(The proposition gives a pointwise $O(\log r)$ comparison; existence of the limit is
not asserted.  If both $\lim_{r\to\infty}K^A(z\restr r)/r$ and
$\lim_{r\to\infty}K^{A,z}(u\restr r)/r$ exist individually, the liminf is a limit
and equals $d_Z + d_F$.)  In particular, if $z$ has algorithmic dimension $d_Z$
relative to $A$ and $u$ has dimension $d_F$ relative to $A$ and $z$, then
$\dimA(x) = d_Z + d_F$.
\end{corollary}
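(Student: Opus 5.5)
The plan is to divide the identity of Proposition~\ref{prop:exact} by $r$ and pass to the liminf, using the definition of relativized algorithmic dimension, $\dimA(x) = \liminf_{r\to\infty} K^A(x\restr r)/r$. Fix $A$ and $x=\psi(z,u)$. Proposition~\ref{prop:exact} gives a constant $C$, depending on $L_1,L_2,n$ but not on $x$ or $r$, such that for all sufficiently large $r$,
\[
\bigl|K^A(x\restr r) - K^A(z\restr r) - K^{A,z}(u\restr r)\bigr| \leq C\log r .
\]
Dividing by $r$, the difference between $K^A(x\restr r)/r$ and $\bigl(K^A(z\restr r)+K^{A,z}(u\restr r)\bigr)/r$ is at most $C\log r/r$, which tends to $0$. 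Two real sequences whose difference tends to zero have the same liminf: take $\liminf$ of $a_r \leq b_r + \eps_r$ and of $b_r \leq a_r + \eps_r$, with $\eps_r\to 0$. This gives the displayed formula for $\dimA(x)$. I would note that the formula holds even if the liminf is $+\infty$. In fact it is finite, since $K^A(z\restr r)\leq (n-1)r+O(1)$ and $K^{A,z}(u\restr r)\leq r+O(\log r)$.

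For the parenthetical claim, suppose $\lim K^A(z\restr r)/r = d_Z$ and $\lim K^{A,z}(u\restr r)/r = d_F$ both exist. Then the sum of the two ratios converges to $d_Z+d_F$, so the liminf is a genuine limit and $\dimA(x)=d_Z+d_F$. Again the reason is that the $O(\log r)/r$ error vanishes.

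The ``in particular'' sentence is the only delicate point. If ``$z$ has algorithmic dimension $d_Z$'' means only $\liminf K^A(z\restr r)/r = d_Z$, and similarly for $u$, then additivity can fail: the liminf of a sum can strictly exceed the sum of the liminfs when the low-complexity scales of $z$ and $u$ are interleaved. I would therefore read the hypothesis as existence of the limits, meaning $z$ and $u$ have matching lower and upper dimensions (strong dimension equals dimension). With that reading, the sentence follows from the previous paragraph.

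The main obstacle, such as it is, lies in stating this interpretive caveat carefully. The only other point to check is that the conditional term $K^{A,z}(u\restr r)$ conditions on $z\restr r$ at the same precision $r$. Its dimension is therefore a precision-indexed conditional quantity, which is exactly what the hypothesis ``dimension $d_F$ relative to $A$ and $z$'' must refer to. Under the unconditional liminf-only reading, I would give only the one-sided bound
\[
\dimA(x)\geq \liminf_{r\to\infty} \frac{K^A(z\restr r)}{r} + \liminf_{r\to\infty} \frac{K^{A,z}(u\restr r)}{r}
\]
as a fallback.
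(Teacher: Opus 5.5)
Your argument is correct and matches the paper's implicit proof of the corollary: divide the pointwise $O(\log r)$ identity of Proposition~\ref{prop:exact} by $r$, and note that sequences differing by $o(1)$ have the same $\liminf$. Your caveat about the ``in particular'' clause is also right: under the $\liminf$ definition of dimension in the paper's notation table, that clause follows only when the individual limits exist, which is exactly what the paper's parenthetical supplies, and otherwise only your superadditive lower bound survives (for the maximal values $d_Z=n-1$, $d_F=1$ used in Corollary~\ref{cor:kakeya-regular} the distinction disappears, since a $\liminf$ equal to the trivial upper bound forces the limit).
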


\begin{remark}[Incompressibility and Martin-L\"of randomness]
\label{rem:MLrandom}
The condition $\dim(x) = n$ is a \emph{liminf} statement: $\liminf_{r\to\infty} K(x\restr r)/r = n$.
This is distinct from---and weaker than---the condition $K(x\restr r) \geq nr - O(1)$ for
\emph{all} $r$, which is the Levin--Schnorr characterization of Martin-L\"of randomness.
Martin-L\"of random points have full ambient algorithmic dimension ($\dim(x)=n$), but the
converse need not hold: a point with $\dim(x)=n$ need not be Martin-L\"of random.

In our regular-fibering regime, Proposition~\ref{prop:exact} gives a lower bound
$K^A(x\restr r) \geq (n-1)r + r - O(\log r) = nr - O(\log r)$ for suitable oracle $A$,
which is the $o(r)$-error version consistent with the Martin-L\"of benchmark.
From this perspective, the chain-rule argument is an argument that points in a
regular-fibered Kakeya set are \emph{algorithmically typical} at the ambient rate:
directional richness prevents the set from resembling a lower-dimensional subspace.
The oracle in the point-to-set principle plays the role of an adversary attempting to
make $x$ appear compressible; the regular-fibering case shows this adversary fails.
Whether the stronger Levin--Schnorr bound holds (i.e., whether $K(x\restr r)\geq nr-O(1)$
along a dense subsequence) is a separate question not addressed here.
\end{remark}

\subsection{Corollary: Application to Kakeya Sets}

\begin{corollary}[Kakeya regular-fibering case]
\label{cor:kakeya-regular}
Let $E \subseteq \R^n$ be a Kakeya set with a regular identifiable fibering
$\psi: \Sph^{n-1} \times [0,1] \to E$ given by $\psi(e,t) = a(e) + te$,
where the full map $(e,t) \mapsto a(e) + te$ satisfies the three hypotheses of
Proposition~\ref{prop:exact} (effectively bi-Lipschitz, identifiable, and computable
with effective modulus).
Then for any oracle $A$ and any $x \in E$ such that $e$ has dimension $n-1$
relative to $A$ and $t$ has dimension $1$ relative to $A$ and $e$,
\[
\dimA(x) = n.
\]
This aligns the regular-fibering regime with the point-to-set mechanism underlying
full-dimensional lower bounds.
\end{corollary}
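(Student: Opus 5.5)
The plan is to obtain the statement as a direct specialization of Corollary~\ref{cor:dimension-additive}, so the work lies in matching the Kakeya parametrization to the hypotheses of Proposition~\ref{prop:exact}. Concretely, set $z = e$ and $u = t$, with $U = [0,1]$. The fiber map is $\phi(e) = e$, which is trivially computable with a computable modulus. By hypothesis, $\psi(e,t) = a(e) + te$ is effectively bi-Lipschitz, injective, and computable. The only formal mismatch is that Proposition~\ref{prop:exact} takes $Z \subseteq \R^{n-1}$, whereas here $Z = \Sph^{n-1}$. I would resolve this first by covering $\Sph^{n-1}$ with finitely many computable bi-Lipschitz charts, for instance stereographic projections from the two poles restricted to closed hemispheres.

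On each chart, composing $\psi$ with the inverse chart map keeps the three hypotheses intact. Bi-Lipschitz constants multiply by fixed chart constants, injectivity is preserved, and computability is preserved under composition. Dyadic approximations of $e$ and of its chart coordinate $\tilde e$ are mutually computable from each other with $O(1)$ extra precision bits, plus an $O(1)$ chart index. Hence $K^A(e\restr r) = K^A(\tilde e\restr r) + O(\log r)$, and likewise $K^{A,e}(t\restr r) = K^{A,\tilde e}(t\restr r) + O(\log r)$, since each oracle string can be converted to the other at a cost of $O(\log r)$ bits (the precision $r$ and a constant-size pointer).

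With the reduction in place, Proposition~\ref{prop:exact} gives
\[
K^A(x\restr r) = K^A(e\restr r) + K^{A,e}(t\restr r) + O(\log r).
\]
The dimension hypotheses give $K^A(e\restr r) \ge (n-1-\eps)r$ and $K^{A,e}(t\restr r) \ge (1-\eps)r$ for all large $r$, for every $\eps>0$. Because the hypotheses are stated as dimensions, i.e.\ liminfs, I use only these lower bounds and never assume the limits exist. This yields $\liminf_r K^A(x\restr r)/r \ge n$. The matching upper bound $K^A(x\restr r) \le nr + O(\log r)$ is the trivial bound for points of a bounded subset of $\R^n$. Together these give $\dimA(x) = n$. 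This also sidesteps the caveat in Corollary~\ref{cor:dimension-additive} about sums of liminfs: the lower bound on the sum follows from lower bounds on each summand, and the upper bound comes for free.

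The main obstacle I expect is the chart step, specifically verifying that the conditional term $K^{A,e}(t\restr r)$ is stable, up to $O(\log r)$, under changing the oracle from $e\restr r$ to $\tilde e\restr r$ and under the slight precision shifts. I would handle it by writing an explicit translation program with $r$ as input and the chart constants hardcoded, mirroring the overhead accounting of Proposition~\ref{prop:exact}. If that proves fiddly, the fallback is to note that the proof of Proposition~\ref{prop:exact} never used the linear structure of $Z$, only compactness, computable grids, and bi-Lipschitz bounds. Those all hold on $\Sph^{n-1}$ with a computable $2^{-s}$-net in place of the dyadic grid, so the proposition could be reapplied verbatim with that net.
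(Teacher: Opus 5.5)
Your proposal is correct and follows the paper's route. The paper obtains this corollary simply by specializing Proposition~\ref{prop:exact} and Corollary~\ref{cor:dimension-additive} to $z=e$, $u=t$ with $d_Z=n-1$, $d_F=1$.

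Your chart reduction, and your use of only the per-summand lower bounds together with the trivial bound $K^A(x\restr r)\le nr+O(\log r)$, make explicit two points the paper glosses over. The first is the mismatch between $Z\subseteq\R^{n-1}$ in the proposition and $Z=\Sph^{n-1}$ here. The second is the liminf-of-a-sum caveat in Corollary~\ref{cor:dimension-additive}; it is harmless here because $n-1$ and $1$ are the maximal possible rates, so the limits exist.
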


\section{Kakeya as the Canonical Direction-Rich Example}
\label{sec:kakeya}

\subsection{Definition and History}

\begin{definition}
A \emph{Kakeya set} in $\R^n$ is a compact set $E \subseteq \R^n$ containing a unit
line segment in every direction: for each $e \in \Sph^{n-1}$, there exists $a(e) \in \R^n$
such that $\{a(e) + te : t \in [0,1]\} \subseteq E$.
\end{definition}

\begin{definition}[Kakeya conjecture]
Every Kakeya set $E \subseteq \R^n$ satisfies $\dimH(E) = n$.
\end{definition}

Besicovitch's 1928 construction \cite{besicovitch1928} yields Kakeya sets of Lebesgue measure zero in $\R^n$ for
all $n \geq 2$, a remarkable geometric achievement showing that direction-rich sets need not
be metrically thick.  Beyond this construction, Besicovitch developed the theory of linearly
measurable sets and projection results that directly anticipate Marstrand's projection
theorem---a foundational result in geometric measure theory establishing that the projections
of sets of Hausdorff dimension $s > 1$ onto almost every line have positive length.
Yet the Kakeya conjecture asserts such sets cannot be \emph{dimensionally} small.
The conjecture is established in the plane ($n=2$) by Davies (1971).  For $n \geq 3$ it
has long resisted resolution, with lower bounds of $\dimH(E) \geq (2n+2)/3 + \eps_n$ due to
Wolff, Katz--Tao, and Hickman--Rogers--Zhang \cite{hickmanrogerszh2019}.  Dvir (2008)
resolved the finite-field analogue via the polynomial method.  Bourgain and Demeter's
decoupling theorem \cite{bourgaindemeter2015} provided a key harmonic-analytic tool
exploited in subsequent dimension bounds.

\begin{remark}[Resolution in $\R^3$]
\label{rem:R3}
Wang and Zahl~\cite{wangzahl2025} proved the Kakeya conjecture in $\R^3$; a streamlined
account appears in~\cite{guth2025}.  The information-theoretic obstruction
perspective developed here remains central for $n \geq 4$: in higher dimensions the
adaptive-fibering obstruction is unresolved, and the compression-theoretic diagnosis
(no oracle can simultaneously reduce complexity below the ambient rate for all points)
provides a language for what any future proof must rule out.
\end{remark}

\subsection{Geometric Representation and Informational Decomposition}

For $x = a(e) + te$ in a Kakeya set, the three geometric components contribute:

\begin{enumerate}
\item Direction $e \in \Sph^{n-1}$: $K(e \restr r) \approx (n-1)r$ bits.
\item Along-segment coordinate $t \in [0,1]$: $K(t \restr r \mid e \restr r) \approx r$ bits.
\item Basepoint $a(e)$: $K(a(e) \restr r \mid e \restr r) = O(\log r)$ in the Lipschitz-regular
      case, or $O(r)$ when irregular.
\end{enumerate}

Under regularity conditions, the total complexity becomes
\[
K(x \restr r) \approx (n-1)r + r + O(\log r) = nr + O(\log r),
\]
suggesting that points in a Kakeya set retain full ambient complexity.

\subsection{Application of the Central Proposition}

The Kakeya fibering $\psi(e,t) = a(e) + te$ fits Proposition~\ref{prop:exact} whenever
the \emph{full} parametrization $(e,t) \mapsto a(e) + te$ satisfies all three hypotheses:
(i) effectively bi-Lipschitz on $\Sph^{n-1}\times[0,1]$ jointly in $(e,t)$,
(ii) injective (identifiable), and (iii) computable with effective modulus.
Regularity of $e \mapsto a(e)$ alone is not sufficient; the needed condition is on
the joint map $\psi$, and in particular the Lipschitz bound must hold globally
across both the direction and the fiber coordinate.  Under these conditions,
Corollary~\ref{cor:kakeya-regular} yields $\dimA(x) = n$ for every oracle $A$.  The
passage to Hausdorff dimension then uses the point-to-set principle: for every oracle $A$
there exists a point $x \in E$ (specifically one whose direction $e$ and along-fiber
coordinate $t$ are algorithmically random relative to $A$) achieving $\dimA(x) = n$,
so $\dimH(E) = \min_A \sup_{x\in E}\dimA(x) = n$.

\textbf{Key observation.}  In Kakeya sets with irregular basepoint map or non-unique fiber
assignment, these conditions fail.  An adversarial oracle can exploit fiber-choice freedom
to reduce effective description length below the ambient rate.  This is the gap the Kakeya
conjecture highlights.

\begin{figure}[t]
\centering
\begin{tikzpicture}[>=Stealth, scale=0.88]
  \coordinate (base) at (0.6,0.3);
  \coordinate (xpt)  at (2.4,1.22);
  \coordinate (edir) at (3.5,1.8);
  \coordinate (origin) at (0,0);

  \draw[thick] (base) -- ($(base)!1!(edir)$);
  \fill (base) circle (2pt);
  \node[below left, font=\small] at (base) {$a(e)$};

  \fill (xpt) circle (2pt);
  \node[above left, font=\small] at (xpt) {$x = a(e)+te$};

  \draw[decorate, decoration={brace, amplitude=5pt, mirror}]
    ($(base)+(0,-0.18)$) -- ($(xpt)+(0,-0.18)$)
    node[midway, below=5pt, font=\small] {$t$};

  \draw[->, thick, gray] (origin) -- (1.1,0.57);
  \node[above left, font=\small, gray] at (0.55,0.42) {$e$};

  \node[align=left, font=\scriptsize, anchor=west] at (3.8,1.9)
    {$K(e\restr r)\approx(n{-}1)r$};
  \node[align=left, font=\scriptsize, anchor=west] at (3.8,1.2)
    {$K(t\restr r\mid e\restr r)\approx r$};
  \node[align=left, font=\scriptsize, anchor=west] at (3.8,0.4)
    {$K(a(e)\restr r\mid e\restr r)=O(\log r)$};
  \node[align=left, font=\scriptsize, anchor=west] at (3.8,-0.1)
    {\textit{(regular regime)}};
\end{tikzpicture}
\caption{Decomposition of a point on a directional fiber.
Direction $e$ contributes $(n-1)r$ bits, the along-fiber coordinate contributes $r$ bits,
and the basepoint overhead is logarithmic in the regular regime.  Total complexity
matches the ambient dimension.}
\label{fig:decomposition}
\end{figure}
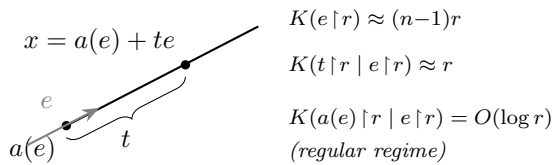

\section{The Obstruction: Ambiguous Fibering and Decoder Advantage}
\label{sec:obstruction}

\subsection{Non-Uniqueness of Decompositions}

In an irregular Kakeya set, a point may lie on multiple segments.  If
$x = a(e_1) + t_1 e_1 = a(e_2) + t_2 e_2$ for distinct $e_1, e_2$, the complexity split
is non-unique.  The decomposition via $e_1$ yields budget
$K(e_1 \restr r) + K(t_1 \restr r \mid e_1 \restr r)$, the decomposition via $e_2$
yields $K(e_2 \restr r) + K(t_2 \restr r \mid e_2 \restr r)$; these can differ
substantially.

A compressor minimizing description length will choose the smallest total.  In the
point-to-set framework, the oracle encodes the entire fiber structure and adaptively
selects the best decomposition for each point at each scale.  The compression principle
requires that \emph{no} such adaptive choice reduces complexity below the ambient rate.
Verifying this uniformly is the central difficulty.

\subsection{Adaptive Side Information and Oracle Advantage}

The point-to-set principle frames dimension as a game: an adversarial oracle chooses
side information to minimize complexity.  In a fibered geometry, the oracle's most
powerful strategy is to select, for each $x$ and precision $r$, the fiber label $e$
maximizing compression gain.  Specifically, the oracle can encode:
\begin{enumerate}
\item the entire fiber structure (basepoints, directions, segment endpoints);
\item an adaptive selection function $\phi_r(x)$ returning the direction minimizing
      $K^A(x \restr r \mid \phi_r(x) \restr r)$ at each scale; and
\item full information about fiber-assignment uniqueness at each point.
\end{enumerate}
This adaptive freedom is the core obstruction.  In the regular-fibering regime of
Proposition~\ref{prop:exact}, the fiber assignment is essentially unique and the
oracle's advantage is neutralized.  In the general case, proving the Kakeya conjecture
would require showing that directional richness imposes a complexity lower bound
no adaptive scheme can circumvent.

\begin{example}[Planar crossing: explicit $\Gamma_r$ calculation]
\label{ex:crossing}
Consider the simplest case in $\R^2$: two line segments crossing at a single point.
Let $e_1 = (1,0)$ (horizontal) with fiber $F_{e_1} = \{(t,\,c) : t\in[0,1]\}$ at height
$c$, and $e_2 = (0,1)$ (vertical) with fiber $F_{e_2} = \{(c',\,s) : s\in[0,1]\}$ at
abscissa $c'$.  Their intersection is $x^* = (c',c)$.

For any $x = (x_1,x_2)$ with $K(x\restr r)\approx 2r$ (both coordinates algorithmically
independent and random at precision $r$):
\begin{itemize}
  \item \emph{Horizontal fiber:} label $z_1 = x_2$,\ residual $u_1 = x_1$;\;
    $K(x\restr r \mid z_1\restr r) \approx K(x_1\restr r \mid x_2\restr r) \approx r$.
  \item \emph{Vertical fiber:} label $z_2 = x_1$,\ residual $u_2 = x_2$;\;
    $K(x\restr r \mid z_2\restr r) \approx r$.
\end{itemize}
The adaptive oracle selects the more compressive label:
\[
\Gamma_r(x)
  = K(x\restr r) - \min\bigl\{K(x\restr r \mid z_1\restr r),\;K(x\restr r \mid z_2\restr r)\bigr\}
  \approx 2r - r = r.
\]
This linear gain $\Gamma_r(x) = r + O(\log r)$ is sharp but not pathological: the chain
rule forces $K(x\restr r \mid z\restr r) \geq K(x\restr r) - K(z\restr r) - O(\log r)
\approx r$ for any direction label $z$ with $K(z\restr r)\leq r$, so no oracle can drive
the residual below $r - O(\log r)$ for a single algorithmically random point.

The obstruction in the full Kakeya setting is therefore \emph{global and oracle-uniform}:
the challenge (Problem~3 in Section~\ref{sec:program}) is to show that a single oracle $A$
cannot simultaneously reduce $K^A(x\restr r \mid z^A(x)\restr r)$ below $r - \Omega(r)$
for \emph{all} $x \in E$ at once, not merely for one isolated intersection.
\end{example}

\subsection{Disintegration of Complexity and the Borel--Kolmogorov Phenomenon}

The core structure of Proposition~\ref{prop:exact} is an \emph{algorithmic disintegration}:
the total description of $x$ splits into the description of its fiber label $z = \pi(x)$
plus the residual description of $x$ given the label.  Algorithmically, this is the
chain rule.  The continuous analogue is the formula for algorithmic dimension under a
projection $\pi$:
\begin{equation}
\label{eq:dimchain}
\dim(x) = \dim(\pi(x)) + \dim(x \mid \pi(x)),
\end{equation}
where $\dim(x \mid \pi(x))$ denotes the dimension of $x$ relative to the oracle $\pi(x)$.
Equation~\eqref{eq:dimchain} mirrors the classical chain rule for Kolmogorov complexity
$K(x,y) = K(y) + K(x \mid y) + O(\log K(x,y))$ and the disintegration theorem in
measure theory, which expresses a joint measure as a base measure on labels integrated
against conditional measures on fibers.  The algorithmic and measure-theoretic statements
are not formally equivalent but share the same logical structure: a complex object is
analyzed by first specifying a coarse label and then describing the residual.

The non-uniqueness of this decomposition has a sharp parallel in classical probability.
The \textbf{Borel--Kolmogorov paradox} demonstrates that conditioning on a measure-zero
event can yield different conditional distributions depending on the sigma-algebra used
to define the limiting procedure.  The conditional distribution of a point on a great
circle, for example, depends on whether one takes the limit through latitude strips,
longitude strips, or some other family of sets.  There is no canonical answer without
specifying the disintegration.

The algorithmic setting makes this identification precise.  \emph{An adaptive fibering
is a choice of sigma-algebra, which is a choice of oracle.}  More concretely: different
projection maps $\pi$ (i.e., different fiber-label functions $x \mapsto z(x)$) yield
different disintegrations of $\dim(x)$ via equation~\eqref{eq:dimchain}, and none is
canonical without additional regularity.  The oracle in the point-to-set principle
encodes exactly this choice.  When one asks ``what is $K^A(x \restr r)$?'', the oracle
$A$ implicitly selects the conditioning structure: which fiber family, which basepoint
map, which disintegration.

The sticky Kakeya condition---$|a(e)-a(e')| \lesssim |e-e'|$---eliminates the
Borel--Kolmogorov freedom by making the fibering essentially unique.  Under stickiness,
all reasonable disintegrations agree to within $O(\log r)$, and the algorithmic chain
rule is stable.  This is precisely why the sticky case is tractable: stickiness
converts a non-canonical conditioning problem into a canonical one.

The full Kakeya conjecture is, in these terms, the statement that no choice of
sigma-algebra---no choice of oracle---can reduce total description length below the
ambient rate.  Unlike the sticky case, the irregular case requires a \emph{uniform
bound across all possible fiberings and all possible oracles simultaneously}.  This
uniformity requirement is why the conjecture resists the techniques that resolve the
sticky case.  A proof is likely to require new tools at the intersection of algorithmic
information theory, harmonic analysis, and geometric measure theory---connecting
oracle-relativized covering arguments with the decoupling inequalities
(cf.\ \cite{bourgaindemeter2015}) that have driven recent harmonic-analytic progress.

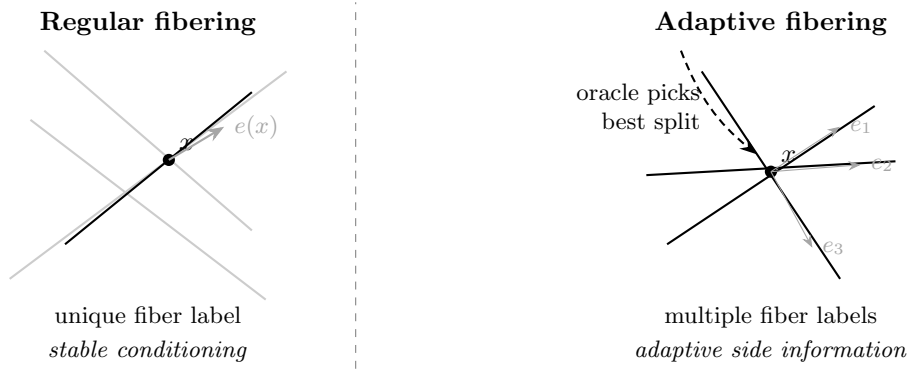
\begin{figure}[t]
\centering
\begin{tikzpicture}[>=Stealth, scale=0.92]
  \begin{scope}[shift={(-4.5,0)}]
    \node[font=\small\bfseries] at (1.5,3.2) {Regular fibering};
    \draw[gray!40, thick] (-0.5,-0.5) -- (3.5,2.5);
    \draw[gray!40, thick] (-0.2,1.8)  -- (3.2,-0.8);
    \draw[gray!40, thick] (0.0,2.8)   -- (3.0,0.2);
    \draw[thick] (0.3,0.0) -- (3.0,2.2);
    \fill (1.8,1.22) circle (2.5pt);
    \node[above right, font=\small] at (1.8,1.22) {$x$};
    \draw[->, thick, gray!70] (1.8,1.22) -- (2.6,1.7);
    \node[right, font=\footnotesize, gray!70] at (2.6,1.7) {$e(x)$};
    \node[align=center, font=\footnotesize] at (1.5,-1.3)
      {unique fiber label\\[2pt]\emph{stable conditioning}};
  \end{scope}
  \begin{scope}[shift={(4.5,0)}]
    \node[font=\small\bfseries] at (1.5,3.2) {Adaptive fibering};
    \draw[thick] (0.0,0.0) -- (3.0,2.0);
    \draw[thick] (-0.3,1.0) -- (3.3,1.2);
    \draw[thick] (0.5,2.5) -- (2.5,-0.5);
    \fill (1.5,1.05) circle (2.5pt);
    \node[above right, font=\small] at (1.5,1.05) {$x$};
    \draw[->, gray!70] (1.5,1.05) -- (2.5,1.7);
    \node[right, font=\footnotesize, gray!70] at (2.5,1.7) {$e_1$};
    \draw[->, gray!70] (1.5,1.05) -- (2.8,1.15);
    \node[right, font=\footnotesize, gray!70] at (2.8,1.15) {$e_2$};
    \draw[->, gray!70] (1.5,1.05) -- (2.1,-0.05);
    \node[right, font=\footnotesize, gray!70] at (2.1,-0.05) {$e_3$};
    \draw[->, thick, densely dashed] (0.2,2.8) to[bend right=15]
      node[midway, left, font=\footnotesize, align=right]
        {oracle picks\\best split} (1.3,1.3);
    \node[align=center, font=\footnotesize] at (1.5,-1.3)
      {multiple fiber labels\\[2pt]\emph{adaptive side information}};
  \end{scope}
  \draw[gray, dashed] (0,-1.8) -- (0,3.5);
\end{tikzpicture}
\caption{Regular vs.\ adaptive fibering.  Left: each point lies on a unique fiber,
yielding a stable conditional-complexity decomposition.  Right: a point lies at the
intersection of multiple fibers; an oracle selects the most compressive representation
pointwise.  This adaptive freedom is the central obstruction.}
\label{fig:adaptive}
\end{figure}

\section{Information-Theoretic Interfaces}
\label{sec:IT}

\subsection{Source Coding with Side Information}

The compression principle translates naturally into source coding.  In Slepian--Wolf and
Wyner--Ziv frameworks \cite{slepianwolf1973,wyziv1976}, a source $X$ is encoded at rate
$R = H(X \mid Z)$ when side information $Z$ is available at the decoder.  Cover
\cite{cover2006} observed that side information is essentially Bayesian conditioning:
providing $Z$ reduces entropy from $H(X)$ to $H(X \mid Z)$.  The present setting is the
geometric and algorithmic analogue: the fiber label $z$ is the conditioning variable, and
$K(x \restr r \mid z \restr r)$ plays the role of $H(X \mid Z)$ for individual descriptions.
The compression principle proposes that this conditional rate cannot drop below the
intrinsic fiber dimension when the fiber structure is regular.  This connection is further
explored in the authors' related work \cite{polsonzantedeschi2025,polsonzantedeschi2025b}.

The parallel is instructive but not exact: source coding concerns i.i.d.\ sequences and
Shannon entropy, while the present setting involves individual strings and Kolmogorov
complexity.

\subsection{Metric Entropy and Covering Complexity}

A second interface connects to metric entropy in the sense of Kolmogorov--Tikhomirov.
The $\eps$-entropy $H_\eps(E)$ measures the logarithm of the minimum number of
$\eps$-balls needed to cover $E$.  For sets of Hausdorff dimension $d$, one expects
$H_\eps(E) \asymp d\log(1/\eps)$.

The compression principle is interpretable as a statement about conditional covering
complexity.  With fiber-label side information, the covering problem reduces to covering
a single fiber $F_e$, with residual entropy $H_\eps(F_e) \asymp d_F \log(1/\eps)$.
The total covering complexity is approximately
\[
H_\eps(E) \approx H_\eps(Z) + \sup_{z \in Z} H_\eps(F_z),
\]
which with $(n-1)$-dimensional directions and one-dimensional fibers yields
$H_\eps(E) \asymp n\log(1/\eps)$, consistent with $\dimH(E) = n$.

\subsection{Blackwell Comparisons, Garbling, and Stitched Fiberings}

A third perspective places admissible side-information schemes in a partial order
analogous to the Blackwell ordering of statistical experiments \cite{blackwell1953}.
Say that $Z_1$ is \emph{more informative} than $Z_2$ for residual compression of $x$ if
\[
K(x \restr r \mid Z_1 \restr r) \;\leq\; K(x \restr r \mid Z_2 \restr r) + o(r)
\]
uniformly over $x \in E$ and all oracles $A$.  This preorder captures which side-information
schemes yield asymptotically smaller residual descriptions, and connects the present
framework to classical comparison-of-experiments theory.  We use ``preorder'' informally
here; a rigorous definition would fix an admissible class of coding models and make
the uniformity conditions explicit.

\medskip\noindent\textbf{Garbling.}\;
A \emph{garbling} of $Z$ is a computable map $G\colon\Sigma^{*}\to\Sigma^{*}$ applied to
the label stream: the decoder receives $G(Z\restr r)$ in place of $Z\restr r$.
Computable maps cannot increase the information content of their input, a fact that
translates directly into a complexity inequality.

\begin{remark}[Garbling weakly increases residual complexity]
\label{rem:garbling}
Let $Z$ be a regular identifiable fiber-label scheme and $G$ a computable garbling.
In the regime of Proposition~\ref{prop:exact},
\[
K^A(x \restr r \mid G(Z) \restr r) \;\geq\; K^A(x \restr r \mid Z \restr r) - O(\log r),
\]
uniformly over $x \in E$ and all oracles $A$.  That is, garbling weakly increases
residual complexity up to logarithmic terms, a data-processing inequality for $K$.
Consequently, $Z$ dominates $G(Z)$ in the informativeness preorder whenever $G$ is not
$O(\log r)$-close to the identity on the relevant label strings.
\end{remark}

\medskip\noindent\textbf{Stitched fiberings.}\;
Global fiber-label schemes are often assembled from local charts.  A
\emph{stitched fibering} consists of a collection $\{\psi_\alpha\}$ of local regular
fiberings on overlapping open sets $\{U_\alpha\}$ covering $E$, with computable transition
functions on overlaps $U_\alpha\cap U_\beta$.  Within each chart, Proposition~\ref{prop:exact}
applies and the complexity split is clean.  On overlaps, however, a point
$x\in U_\alpha\cap U_\beta$ may receive fiber labels $z_\alpha$ and $z_\beta$ that are
not canonically identified, generating the ambiguity gain
$\Gamma_r(x) = K(x\restr r) - \inf_{z\in Z(x)} K(x\restr r\mid z\restr r) \geq 0$.
When the transition functions require more than $O(\log r)$ bits to specify, the stitched
scheme fails to dominate any single-chart scheme, and Blackwell comparisons between
overlapping charts become \emph{incomparable} rather than ordered.  This incomparability
is a signature of the adaptive-fibering obstruction: no canonical global labeling exists,
and the compression gains predicted by the single-chart analysis cannot be realized uniformly.

\subsection{Robust Conditional Compression and Minimax Formulation}

The compression question is fundamentally adversarial.  Define the
\emph{robust conditional complexity}
\[
\cK_r(E) := \min_{\text{fiberings } \cF} \sup_{x \in E} K(x \restr r \mid z_{\cF}(x) \restr r).
\]
A complete definition requires fixing an admissible class of fiberings, for example
computable regular fiberings or stitched computable atlases; the minimax value depends
on this choice.
The compression principle proposes $\cK_r(E) \geq d_F \cdot r - o(r)$ when the fiber
family is sufficiently rich.  This minimax formulation unifies different perspectives:
the oracle-based argument lower-bounds $\cK_r(E)$ via the point-to-set principle,
the source-coding argument interprets it as worst-case conditional rate, and the
metric-entropy argument relates it to conditional covering numbers.

\begin{figure}[t]
\centering
\vspace{4pt}
\begin{tikzpicture}[
  >=Stealth,
  box/.style={draw, rounded corners=4pt, minimum width=3.1cm, minimum height=1.25cm,
              align=center, font=\small, text width=2.9cm, inner sep=5pt},
  arr/.style={->, thick},
  scale=1.0
]
  \node[box] (geom) at (0,0)
    {\textbf{Geometry}\\[3pt]Fibered set $E$\\family $\{F_z\}$};
  \node[box] (enc) at (5.2,0)
    {\textbf{Encoder}\\[3pt]Point $x \in E$\\precision $r$};
  \node[box] (dec) at (10.4,0)
    {\textbf{Decoder / Oracle}\\[3pt]Side info $z\restr r$\\adaptive fibering};
  \node[box] (out) at (5.2,-3.8)
    {\textbf{Outcome}\\[3pt]Residual complexity\\$\dim^A(x) \stackrel{?}{\geq} n$};

  \draw[arr] (geom) -- (enc)
    node[midway, above, font=\footnotesize] {admissible fibers};
  \draw[arr] (enc) -- (dec)
    node[midway, above, font=\footnotesize] {encoded description};
  \draw[arr] (dec) -- (out)
    node[midway, right, font=\footnotesize, align=left]
      {conditional\\compression};
  \draw[arr] (out) -- (geom)
    node[midway, left, font=\footnotesize, align=right]
      {point-to-set\\$\dimH(E) = n$?};
\end{tikzpicture}
\vspace{6pt}
\caption{Minimax framework for geometric compression.  Geometry determines admissible
fiber decompositions; an encoder selects a point at finite precision; an oracle-like
decoder chooses favorable side information; the point-to-set principle converts surviving
pointwise complexity into a Hausdorff dimension lower bound.}
\label{fig:minimax}
\end{figure}
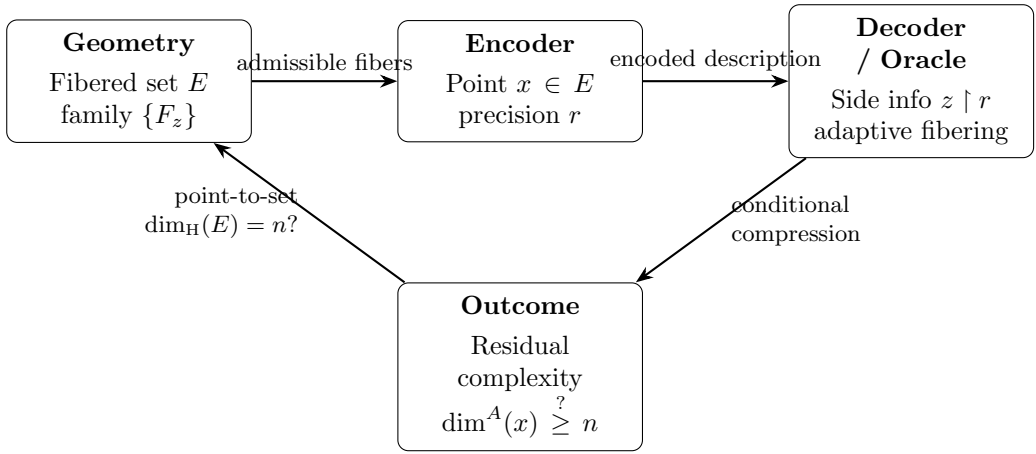

\section{Practical Implications}
\label{sec:practical}

Structured side information is ubiquitous in compression, signal processing, and machine
learning: encoders exploit geometric embeddings, manifold coordinates, or learned latent
factors, identifying a fiber structure and encoding a point by label plus residual.  If
the fiber family is genuinely rich and fibers retain independent variability, side
information about the fiber label cannot reduce residual description length below the
intrinsic fiber dimension.  The distinction between genuine conditional-rate reduction
and ambiguity-enabled artifacts mirrors the distinction between regular and adaptive
fibering.

Garbled or coarsened fiber labels (arising, for example, from quantized latent codes,
noisy cluster assignments, or lossy index compression) act as computable degradations of
the true side-information scheme.  By the data-processing inequality for $K$
(Remark~\ref{rem:garbling}), such garbling can only increase residual complexity, never
reduce it: a compressed or noisy side-information signal provides weakly less benefit than
the original.  Similarly, latent representations assembled by stitching local charts
(as in atlas-based or mixture-of-experts generative models) may exhibit ambiguity at
chart seams.  The stitching obstruction of Section~\ref{sec:IT}
predicts that overlap-induced multiplicity (and any transition functions requiring
super-logarithmic specification) degrades the effective side information and forfeits
the full compression gains available from a single globally identifiable fibering.

Rate allocation in geometry-aware communication systems must answer: how many bits go to
fiber index, how many to along-fiber coordinate, and when is side information genuinely
useful?  The framework, and in particular the Blackwell preorder of
Section~\ref{sec:IT}, provides a language for these questions.

\begin{example}[Blockchain finality as structured side information]
\label{ex:blockchain}
A transaction or local ledger state $x$ in a distributed blockchain system is interpreted
relative to structured contextual labels---chain identity, block depth, checkpoint
status, or finality epoch, which serve as side information $z$.  In a stable finalized
regime, these labels function like an identifiable fiber map: once the relevant chain
history is fixed, the residual uncertainty about the event's status is sharply reduced,
and $K(x\restr r \mid z\restr r) \ll K(x\restr r)$.

Under forks, reorganization risk, or delayed finality, however, the same transaction may
remain compatible with several admissible histories, so the label $z$ is no longer
globally unique.  The admissible label set $\mathcal{Z}(x)$ expands to cover competing
chain branches, and any labeling scheme built by stitching locally coherent node views
creates seam ambiguity at the overlap.  A depth-only label is a garbling of richer
finality information: by Remark~\ref{rem:garbling}, it can only increase residual
complexity.  In a forked regime the competing histories are Blackwell-incomparable
(Section~\ref{sec:IT}): no single label dominates the others in the informativeness
preorder, and apparent compression gains may reflect the choice of history label rather
than genuine reduction in uncertainty.

In the paper's notation, the ledger event plays the role of $x$, the history or finality
label plays the role of $z$, and fork ambiguity enlarges $\mathcal{Z}(x)$ from a
singleton to a multi-element admissible set.  The example illustrates a general lesson:
side information reduces residual description length only when the underlying
representation is stable and globally identifiable: the paper's distinction between
regular and adaptive fibering applies whenever context labels may fail to be globally
unique.
\end{example}

\begin{example}[Retrieval context as structured side information]
\label{ex:retrieval}
In retrieval-augmented systems, an output $x$ (a generated answer, decision, or
ranked result) is interpreted relative to auxiliary context such as retrieved passages,
memory entries, or prompt scaffolds.  These objects act as structured side information
$z$.  A richer retrieval context is naturally more informative than a coarsened summary
or truncated prompt, suggesting a Blackwell-style comparison among context-label schemes
(Section~\ref{sec:IT}); summaries or keyword extracts may be viewed as garblings of
fuller contextual evidence.

The complication is that practical systems assemble context from multiple partially
overlapping sources (retrieved passages, local memories, system prompts, intermediate
summaries) that must be stitched into a single conditioning structure.  When those
pieces do not define a unique global contextual representation, the same output $x$ may
remain compatible with several admissible context labels, enlarging $\mathcal{Z}(x)$.
In such a regime, apparent reductions in residual description length reflect adaptive
context selection rather than genuine informational resolution.  In the notation of the
present paper, the output plays the role of $x$, the retrieved context plays the role
of $z$, and the admissible context family corresponds to $\mathcal{Z}(x)$.

More generally, the value of contextual side information depends not only on how much
content it carries but on whether the conditioning structure it defines is globally
identifiable: the paper's distinction between regular and adaptive fibering applies
naturally to any system whose outputs depend on dynamically assembled context.
\end{example}

\section{A Formal Research Program}
\label{sec:program}

We formulate six problems constituting a research program for the compression-theoretic
study of directional geometries.

\medskip
\noindent\textbf{Problem 1} (Geometric conditional rate function).
Define a rigorous geometric conditional rate function for fibered subsets of $\R^n$,
interpolating between pointwise algorithmic complexity and global metric entropy.

\medskip
\noindent\textbf{Problem 2} (Stability of the conditional split).
Identify sufficient conditions on $\{F_z\}$ and $\psi$ under which
$K(x \restr r) = K(z \restr r) + K(u \restr r \mid z \restr r) + o(r)$
holds uniformly over $x \in E$ and all oracles $A$.

\medskip
\noindent\textbf{Problem 3} (Oracle-uniform lower bounds).
For a fibered set $E$, show that for every oracle $A$,
\[
\sup_{x \in E} \dim^A(x) \geq d_Z + d_F,
\]
without regularity assumptions on the fiber assignment.  This is the core technical
challenge underlying the full compression principle.  Existing techniques (effective
Hausdorff measure, energy methods, Frostman-type arguments) succeed when the fibering is
identifiable because identifiability lets one separate the fiber-label bits from the
along-fiber bits via the chain rule uniformly.  Without identifiability the chain rule
decouples only pointwise, and no current method forces the oracle to ``pay twice''
for the fiber-label overhead across all points simultaneously.  A promising
approach is to track the \emph{effective} Hausdorff content of the fiber-label family
as a function of precision, translating covering-number lower bounds directly into
oracle-relativized complexity bounds.

\medskip
\noindent\textbf{Problem 4} (Algorithmic dimension and metric entropy).
Establish quantitative relationships between pointwise algorithmic dimension and global
metric entropy for geometrically structured families.

\medskip
\noindent\textbf{Problem 5} (Minimax side-information game).
Formalize the minimax game between encoder selecting $x \in E$ and decoder selecting
a fiber assignment.  Characterize the game value for natural classes of fibered sets.

\medskip
\noindent\textbf{Problem 6} (Instantiation in specific geometries).
Investigate whether specific geometric constructions instantiate the compression
principle, including Kakeya configurations with controlled basepoint maps and
extractor-type constructions in additive combinatorics.

\section{Positioning Relative to Existing Literature}
\label{sec:literature}

This paper sits at the intersection of three intellectual traditions.

\emph{Kakeya theory and geometric measure theory.}
The Kakeya conjecture has been studied via harmonic analysis, combinatorial geometry,
and additive combinatorics.  Major contributions include dimension bounds of Wolff,
Katz--Tao, and Hickman--Rogers--Zhang~\cite{hickmanrogerszh2019}, and decoupling
inequalities of Bourgain--Demeter~\cite{bourgaindemeter2015}.  Dvir's finite-field
resolution via the polynomial method is striking.  Wang and
Zahl~\cite{wangzahl2025} recently proved the conjecture in $\R^3$, with a streamlined
account in~\cite{guth2025}.  The present paper does not contribute new geometric
bounds; it proposes an information-theoretic framework for the dimensional question in
arbitrary dimension, where the conjecture remains open.

\emph{Leonid Levin and the complexity-measure connection.}
The bridge between algorithmic randomness and complexity was substantially developed by
Leonid Levin, a student of Kolmogorov at Moscow State University and later at Boston
University.  Levin's work on universal search, resource-bounded complexity, and the
connections between randomness tests and incompressibility helped establish the modern
framework in which measure-theoretic and algorithmic notions of ``typicality'' coincide.
Lutz's constructive Hausdorff dimension and the point-to-set principle are direct
descendants of this program: they translate the local complexity rate of a single point
into global dimensional information about the set containing it.

\emph{Algorithmic dimension and the point-to-set principle.}
The framework of algorithmic dimension (Lutz, Mayordomo, and collaborators) and the
point-to-set principle (J.~Lutz, N.~Lutz) provide our mathematical language.
Geometric applications by Lutz and Stull demonstrate the power of this approach.
What is classical is the framework; what is proposed is the isolation of adaptive-fibering
as a general structural phenomenon.

\emph{Information theory: entropy, coding, and extraction.}
Connections to source coding (Slepian--Wolf, Wyner--Ziv), metric entropy
(Kolmogorov--Tikhomirov), and extractors are suggested but not formalized.
Formalizing these connections is part of the research program.

More broadly, the paper touches two distinct traditions beyond the geometric core:
algorithmic randomness, where Martin-L\"of typicality supplies a natural incompressibility
benchmark, and the foundations of conditioning, where the Borel--Kolmogorov phenomenon
serves as an interpretive reminder that lower-dimensional conditional structure depends
on the chosen mode of disintegration.

In summary, this is a brief analytical and conceptual paper.  It takes an existing
mathematical framework, applies it to a classical geometric problem, and develops a
general compression principle with interfaces to information theory.

\section{Conclusion}
\label{sec:conclusion}

The central idea is straightforward: geometric objects supporting a rich family of
directional fibers should be informationally incompressible.  The Kakeya conjecture is
the most prominent instance.

We have formulated this as a conditional-compression principle in the language of algorithmic
dimension.  The decomposition $K(x \restr r) \approx K(e \restr r) + K(t \restr r \mid e \restr r)$
captures that total point complexity is approximately the sum of directional and along-fiber
complexities.  Under regularity conditions (Lipschitz basepoint maps, identifiable
parametrizations), the decomposition is stable and clarifies the mechanism of known
lower-bound arguments.

The real difficulty is the adaptive-fibering obstruction: when the fiber assignment is
irregular, a point admits multiple decompositions, and an adversarial decoder exploits
ambiguity to reduce description length.  We identify this as the information-theoretic
core of the Kakeya problem.

The contribution is a compression principle and research program, not a resolution of
the Kakeya conjecture.  We outline concrete problems at the interface of geometric measure
theory, algorithmic information theory, and source coding.  Viewed this way, the challenge
is not merely to bound geometric dimension, but to understand which forms of structured
side information truly reduce description length, and which do not.

\appendix

\section{Notation and Terminology}
\label{app:notation}

\begin{table}[!h]
\renewcommand{\arraystretch}{1.2}
\caption{Principal Notation}
\label{tab:notation}
\centering
\footnotesize
\begin{tabular}{>{\raggedright}p{2.6cm} p{5.2cm}}
\toprule
\textbf{Symbol} & \textbf{Meaning} \\
\midrule
$\R^n$ & Euclidean $n$-space \\
$\Sph^{n-1}$ & Unit sphere in $\R^n$ \\
$e \in \Sph^{n-1}$ & Direction (fiber label, Kakeya setting) \\
$a(e)$ & Basepoint of segment in direction $e$ \\
$t \in [0,1]$ & Along-fiber coordinate \\
$x = a(e)+te$ & Point on segment in direction $e$ \\
$z \in Z$ & Fiber label, abstract setting \\
$u$ & Along-fiber coordinate, abstract \\
$F_z$ & Fiber indexed by $z$ \\
$K(\sigma)$ & Prefix-free Kolmogorov complexity \\
$K^A(\sigma)$ & Complexity relative to oracle $A$ \\
$x \restr r$ & Precision-$r$ approximation of $x$ \\
$\dim(x)$ & Algorithmic dimension: $\liminf_{r\to\infty} K(x\restr r)/r$ \\
$\dim^A(x)$ & Oracle-relativized algorithmic dimension \\
$\dimH(E)$ & Hausdorff dimension of $E$ \\
$H_\eps(E)$ & Kolmogorov--Tikhomirov $\eps$-entropy \\
$\cK_r(E)$ & Robust conditional complexity at precision $r$ \\
$\Gamma_r(x)$ & Ambiguity gain: $K(x\restr r)-\inf_z K(x\restr r\mid z\restr r)$ \\
\bottomrule
\end{tabular}
\end{table}

\section{Technical Sidebar: Heuristic Complexity Decomposition}
\label{app:sidebar}

We present a heuristic calculation illustrating the expected complexity decomposition
under favorable conditions.  This is motivational, not a theorem.

\medskip
\noindent\textbf{Setup.}
Let $E \subseteq \R^n$ be a Kakeya set with basepoint map $a: \Sph^{n-1} \to \R^n$.
Fix $x = a(e) + te$ at precision $r$.

\medskip
\noindent\textbf{Chain rule.}
By the chain rule for Kolmogorov complexity,
\[
K(x \restr r) = K(e \restr r) + K(a(e) \restr r \mid e \restr r)
  + K(t \restr r \mid e \restr r, a(e) \restr r) + O(\log r).
\]

\medskip
\noindent\textbf{Directional contribution.}
The direction $e$ on $\Sph^{n-1}$ satisfies $K^A(e \restr r) \geq (n-1)r - o(r)$
for algorithmically random $e$.

\medskip
\noindent\textbf{Basepoint overhead.}
If $a(\cdot)$ is Lipschitz with constant $L$, then
$K(a(e) \restr r \mid e \restr r) = O(\log r)$.
Without Lipschitz regularity, this term can be $O(r)$, the source of the
adaptive-fibering difficulty.

\medskip
\noindent\textbf{Along-fiber contribution.}
For $t$ algorithmically random in $[0,1]$ relative to $(e,A)$:
$K^{A,e}(t \restr r) \geq r - o(r)$.

\medskip
\noindent\textbf{Heuristic conclusion.}
Under the Lipschitz assumption,
\[
K^A(x \restr r) \geq (n-1)r + r - o(r) = nr - o(r),
\]
yielding $\dim^A(x) \geq n$.  The gap to a full theorem lies in controlling basepoint
overhead uniformly without Lipschitz regularity.

\section{Illustrative Finite-Precision Schematic}
\label{app:schematic}

To visualize the distinction between regular and adaptive fibering, we compare
schematic code-length functions.

Define toy code lengths for a point in $\R^n$ at precision $r$:
\begin{align}
L_{\mathrm{reg}}(r) &= nr + c\log(1+r), \\
L_{\mathrm{adapt}}(r) &= nr - \gamma(r),
\end{align}
where $c > 0$ captures basepoint overhead and $\gamma(r) \geq 0$ represents
ambiguity-enabled compression gain.  In the regular regime, code length tracks $nr$
up to logarithmic correction.  Under adaptive side information, $\gamma(r)$ may reduce
code length.

\begin{figure}[t]
\centering
\begin{tikzpicture}
\begin{axis}[
  width=11cm, height=6.5cm,
  xlabel={Precision $r$},
  ylabel={Schematic code length},
  xmin=1, xmax=50,
  ymin=0, ymax=160,
  legend style={at={(0.03,0.97)}, anchor=north west, font=\footnotesize,
                draw=gray!50, fill=white, fill opacity=0.9},
  grid=major,
  grid style={gray!20},
  tick label style={font=\footnotesize},
  label style={font=\small},
  every axis plot/.append style={thick},
]
  \addplot[domain=1:50, samples=80, black, densely dashed] {3*x};
  \addlegendentry{Ambient benchmark $nr$}

  \addplot[domain=1:50, samples=80, black, solid] {3*x + 2*ln(1+x)/ln(2)};
  \addlegendentry{Regular: $nr + c\log(1{+}r)$}

  \addplot[domain=1:50, samples=80, gray, densely dotted] {3*x - sqrt(x)};
  \addlegendentry{Adaptive (mild): $nr - \sqrt{r}$}

  \addplot[domain=1:50, samples=80, gray, solid] {3*x - 0.2*x};
  \addlegendentry{Adaptive (substantial): $nr - 0.2r$}
\end{axis}
\end{tikzpicture}
\caption{Schematic code length under regular and adaptive fiberings ($n=3$).
In the regular regime, code length tracks $nr$ up to logarithmic overhead.
Under adaptive side information, ambiguity-enabled compression gains reduce description
length.  The directional compression principle proposes that for direction-rich geometries
with identifiable structure, gains must be sublinear.}
\label{fig:codelength}
\end{figure}
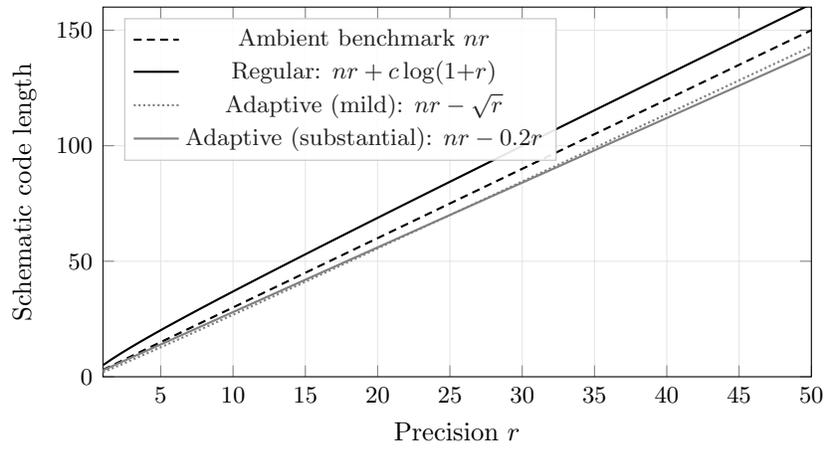

The directional compression principle proposes that for geometries with genuine directional
richness and identifiable fiber structure, the adaptive gain satisfies $\gamma(r) = o(r)$.
The full Kakeya conjecture is that this holds even without identifiability, explaining
why it remains open.



\clearpage
\begin{thebibliography}{10}

\bibitem{lutz2017}
J.~H.~Lutz and N.~Lutz,
``Algorithmic information, plane Kakeya sets, and conditional dimension,''
\textit{ACM Trans.\ Comput.\ Theory}, vol.~10, no.~2, pp.~7:1--7:22, 2018.

\bibitem{lutzstull2020}
J.~H.~Lutz and D.~M.~Stull,
``Bounding the dimension of points on a line,''
\textit{Inform.\ Comput.}, vol.~275, p.~104601, 2020.

\bibitem{davies1971}
R.~O.~Davies,
``Some remarks on the Kakeya problem,''
\textit{Proc.\ Cambridge Philos.\ Soc.}, vol.~69, pp.~417--421, 1971.

\bibitem{dvir2009}
Z.~Dvir,
``On the size of Kakeya sets in finite fields,''
\textit{J.\ Amer.\ Math.\ Soc.}, vol.~22, no.~4, pp.~1093--1097, 2009.

\bibitem{wolff1995}
T.~Wolff,
``An improved bound for Kakeya type maximal functions,''
\textit{Rev.\ Mat.\ Iberoam.}, vol.~11, no.~3, pp.~651--674, 1995.

\bibitem{katztao2002}
N.~H.~Katz and T.~Tao,
``New bounds for Kakeya problems,''
\textit{J.\ Anal.\ Math.}, vol.~87, pp.~231--263, 2002.

\bibitem{slepianwolf1973}
D.~Slepian and J.~K.~Wolf,
``Noiseless coding of correlated information sources,''
\textit{IEEE Trans.\ Inf.\ Theory}, vol.~19, no.~4, pp.~471--480, Jul.\ 1973.

\bibitem{wyziv1976}
A.~D.~Wyner and J.~Ziv,
``The rate-distortion function for source coding with side information at the decoder,''
\textit{IEEE Trans.\ Inf.\ Theory}, vol.~22, no.~1, pp.~1--10, Jan.\ 1976.

\bibitem{lutz2003}
J.~H.~Lutz,
``The dimensions of individual strings and sequences,''
\textit{Inform.\ Comput.}, vol.~187, no.~1, pp.~49--79, 2003.

\bibitem{blackwell1953}
D.~Blackwell,
``Equivalent comparisons of experiments,''
\textit{Ann.\ Math.\ Stat.}, vol.~24, no.~2, pp.~265--272, 1953.

\bibitem{cover2006}
T.~M.~Cover and J.~A.~Thomas,
\textit{Elements of Information Theory}, 2nd~ed.
Hoboken, NJ, USA: Wiley, 2006.

\bibitem{polsonzantedeschi2025}
N.~G.~Polson and D.~Zantedeschi,
``De Finetti + Sanov = Bayes,''
arXiv:2509.13283, Sep.\ 2025.

\bibitem{polsonzantedeschi2025b}
N.~G.~Polson and D.~Zantedeschi,
``Bayesian prediction under moment conditioning,''
arXiv:2510.20742, Oct.\ 2025.

\bibitem{koltikhomirov1959}
A.~N.~Kolmogorov and V.~M.~Tikhomirov,
``$\eps$-entropy and $\eps$-capacity of sets in function spaces,''
\textit{Amer.\ Math.\ Soc.\ Transl.}, vol.~17, pp.~277--364, 1961.

\bibitem{bourgaindemeter2015}
J.~Bourgain and C.~Demeter,
``The proof of the $\ell^2$ decoupling conjecture,''
\textit{Ann.\ of Math.}, vol.~182, no.~1, pp.~351--389, 2015.

\bibitem{besicovitch1928}
A.~S.~Besicovitch,
``On Kakeya's problem and a similar one,''
\textit{Math.\ Z.}, vol.~27, no.~1, pp.~312--320, 1928.

\bibitem{hickmanrogerszh2019}
J.~Hickman, K.~M.~Rogers, and R.~Zhang,
``Improved bounds for the Kakeya maximal conjecture in higher dimensions,''
\textit{Amer.\ J.\ Math.}, vol.~144, no.~6, pp.~1511--1560, 2022.

\bibitem{wangzahl2025}
H.~Wang and J.~Zahl,
``Volume estimates for unions of convex sets, and the Kakeya set conjecture in three dimensions,''
arXiv:2502.17655, Feb.\ 2025.

\bibitem{guth2025}
L.~Guth,
``Outline of the Wang--Zahl proof of the Kakeya conjecture in $\mathbb{R}^3$,''
arXiv:2508.05475, Aug.\ 2025.

\end{thebibliography}
\end{document}